\title{A Uniform Framework for Problems on Context-Free Grammars}
\author{Javier Esparza \and Peter Rossmanith \and Stefan Schwoon}
\institute{
  Technische Universit\"at M\"unchen,
  Arcisstr. 21, 80290 M\"unchen, Germany\\
  \{\texttt{esparza,rossmani,schwoon}\}\texttt{@in.tum.de}
 } 
\def    \e      {\varepsilon}
\def	\gets	{\leftarrow}
\def	\qf	{q_{\it f}}
\def	\pre	{{\it pre}}
\def	\prestar{{\pre^*}}
\def	\To	{\Rightarrow}
\def	\Tostar	{\mathrel{\stackrel{\smash{*}}\To}}
\def	\longhookrightarrow{\lhook\joinrel\relbar\joinrel\rightarrow}
\def\Rulesh#1#2#3{\langle#1\rangle\;{\buildrel{\smash{\lower2pt\hbox{$\scriptstyle{#2}$}}}\over\longhookrightarrow}\
		\langle#3\rangle}
\def	\overrel#1#2{\mathrel{\mathop{\kern0pt #2}\limits^{#1}}}
\def	\underrel#1#2{\mathrel{\mathop{\kern0pt #1}\limits_{#2}}}
\def	\bothrel#1#2#3{\mathrel{\mathop{\kern0pt #2}\limits^{#1}_{#3}}}
\def\newarrow#1{\mathop{{\hbox{\setbox0=\hbox{$\scriptstyle{#1\quad}$}{$\buildrel{\>#1\>}\over{\hbox to \wd0{\rightarrowfill}}$}}}}}
\def\snewarrow#1{\mathop{{\hbox{\setbox0=\hbox{$\scriptstyle{#1\quad}$}{$\buildrel{\smash{\>#1\>}}\over{\hbox to \wd0{\rightarrowfill}}$}}}}}
\def\fnewarrow#1{\mathop{{\hbox{\setbox0=\hbox{$\scriptscriptstyle{#1\quad}$}{$\buildrel{\smash{\>\scriptscriptstyle{#1}\>}}\over{\hbox to \wd0{\rightarrowfill}}$}}}}}
\def\smallarrow#1{\mathop{{\hbox{\setbox0=\hbox{$\scriptstyle{#1\quad}$}{$\buildrel{\,#1\>}\over{\hbox to \wd0{\rightarrowfill}}$}}}}}
\def\Newarrow#1#2{%
\setbox0=\hbox{$\scriptstyle{#1\quad}$}%
\mathrel{\smash{\mathop{\hbox to \wd0{\rightarrowfill}}\limits^{\smash{\lower1pt\hbox{$\scriptstyle{#1}$}}}_{\smash{\raise2pt\hbox{$\scriptstyle{#2}$}}}}}%
}
\def\pmb#1{\setbox0=\hbox{#1}%
        \kern-0.05em\copy0\kern-\wd0
        \kern.05em\copy0\kern-\wd0
        \kern-.025em\raise.0433em\box0}
\newcounter{zeile}
\newbox\kasten
\let\graph=\par
\gdef	\addto#1#2{$#1\leftarrow#1\cup\{#2\}$}
\gdef	\forall#1{[for all] $#1$ [do]}
\gdef	\while#1{[while] $#1$ [do]}
\gdef	\ifthen#1{[if] $#1$ [then]}
\gdef\algo{\catcode`\~=\active
	\catcode`\[=\active\catcode`\>=\active\setcounter{zeile}{0}
	\def\par{\refstepcounter{zeile}\graph\noindent\kern\wd\kasten%
		\llap{{\small\thezeile}}\quad}
	\def[##1]{{\bf##1}}\def~##1~{\mathchar"405B##1\mathchar"505D}
	\def>{\quad}\obeylines}}
\begin{document}

\maketitle

\begin{abstract}
In~\cite{manyauthors}, Bouajjani and others presented an automata-based
approach to a number of elementary problems on context-free grammars.
This approach is of pedagogical interest since it provides a uniform
solution to decision procedures usually solved by independent algorithms in
textbooks, e.g.~\cite{HU79}. This paper improves upon~\cite{manyauthors}
in a number of ways. We present a new algorithm which not only has a better
space complexity but is also (in our opinion) easier to read and understand.
Moreover, a closer inspection reveals that the new algorithm is
competitive to well-known solutions for most (but not all)
standard problems.
\end{abstract}

\section{Introduction}
\label{s:intro}

Textbooks on formal languages and automata (e.g.~\cite{HU79}) discuss
solutions for certain standard problems on context-free grammars like
tests for membership, emptiness, and finiteness. For many of these
problems independent solutions are given, e.g.\ reductions to a
graph-theoretic problem for finiteness or the well-known algorithm by
Cocke, Younger and Kasami~\cite{kasami65,younger67} for the membership
problem. Closer inspection reveals that several problems reduce to
reachability questions between sentential forms.

It has been observed that finite automata can play a useful role in
solving reachability questions. Book and Otto, in their work
about string rewriting systems~\cite{bookotto}, established a connection to
context-free grammars. More to the point, they discovered that for
a regular set~$L$ of strings over variables and terminals, the
predecessors of~$L$ -- i.e.\ the strings from which some element of~$L$
can be derived by repeatedly applying the productions -- also form a
regular set. Very similar results were discovered independently by
B\"uchi~\cite{buechi64} and Caucal~\cite{caucal92}.
Book and Otto also pointed out
the applicability of their algorithm to problems on context-free
grammars without going into details.

Esparza and Rossmanith~\cite{ER97}, attracted by this idea, showed that indeed
the aforementioned problems, among others, could be solved by computing
predecessors of suitable regular languages and provided an $O(ps^4)$ time
algorithm for the task, where $p$ is the size of the productions and $s$
the number of states in an automaton for the regular language. Later on
(in collaboration with other authors~\cite{manyauthors}), they improved
the space
and time complexity to $O(ps^3)$. It has been pointed out that a reduction
to the satisfiability problem for Horn clauses solves the problem within
the same time and space constraints.

In this paper, we provide a further improvement which reduces the space
requirements to $O(ps^2)$ without affecting time complexity. This improvement
makes the algorithm competitive with standard algorithms for some classical
problems, e.g.\ the membership problem. The purpose of the paper is therefore
to show how a number of different decision procedures for a number of problems
can be replaced by a uniform framework, often without sacrificing efficiency.
Since the algorithm is also relatively easy to understand, we hope that this
result will be of value for educational purposes.

The rest of the paper is structured as follows:
In section~\ref{s:notation} we introduce some notations and other
preliminaries. In sections~\ref{s:prestar} and~\ref{s:impl} we present
our algorithm. We discuss applications and give a comparison with other
algorithms in section~\ref{s:app}.

\section{Notations}
\label{s:notation}

We use the notations of~\cite{HU79} for finite automata and context-free
grammars.

Fix a context-free grammar $G=(V,T,P,S)$ for the rest of the section
where $V$ is the set of variables and $T$ the set of terminals. Let
$\Sigma=V\cup T$. The set of productions~$P$ generates reachability
relations $\To$ and $\Tostar$ between strings over~$\Sigma$ in the
following sense: If $A\to\beta$ is a production of~$P$ and $\alpha$
and $\gamma$ are arbitrary strings over $\Sigma$, then
$\alpha A\gamma\To\alpha\beta\gamma$ holds (we also say that $\alpha\beta\gamma$
is derived from $\alpha A\gamma$ by application of $A\to\beta$ or that
$\alpha A\gamma$ is a direct predecessor of $\alpha\beta\gamma$).
The relation $\Tostar$ is the reflexive and transitive closure of~$\To$.
If $\alpha\Tostar\beta$ (for $\alpha,\beta\in\Sigma^*$) then we
call $\alpha$ a predecessor of $\beta$. A string $\alpha$ is called
a sentential form of~$G$ if $S\Tostar\alpha$. For a set of strings
$L\subseteq\Sigma^*$ we denote by $\prestar(L)$ the predecessors
of elements of~$L$, i.e.
\begin{quote}
$\prestar(L)=\{\,\alpha\in\Sigma^*\mid\exists\beta\in L\colon\alpha\Tostar\beta\,\}$
\end{quote}
We can represent regular subsets of $\Sigma^*$ with finite automata.
Given an automaton $A=(Q,\Sigma,\delta,q_0,F)$, the transition relation
$\mathord\to\subseteq Q\times\Sigma^*\times Q$ is inductively defined as follows:
\begin{itemize}
\item $q\newarrow{\e}q$ for all $q\in Q$.
\item if $(q,a,q')\in\delta$, then $q\newarrow{a}q'$.
\item if $(q,a,q'')\in\delta$ and $q''\newarrow{w}q'$ for some $q''\in Q$,
  then $q\newarrow{aw}q'$.
\end{itemize}
Clearly, the language accepted by the automaton is
\begin{quote}
$L(A)=\{\,w\in\Sigma^*\mid\exists \qf\in F\colon q_0\newarrow{w}\qf\,\}$.
\end{quote}

\section{Computing $\prestar(L)$}
\label{s:prestar}

In~\cite{bookotto}, Book and Otto show that for a context-free grammar
$G=(V,T,P,S)$ and a regular language $L\subseteq\Sigma^*$ the set
$\prestar(L)$ is also regular. Moreover, they provide an algorithm
for its computation.

We start from an automaton $A=(Q,\Sigma,\delta,q_0,F)$ accepting~$L$.
We obtain an automaton $A_\prestar$ accepting $\prestar(L)$ by a saturation
procedure. This procedure adds transitions to~$A$ according to the following
saturation rule:
\begin{center}\framebox{
\parbox{8.2cm}{If $A\to\beta\in P$ and $q\newarrow{\beta}q'$ in the
current automaton, \\
add a transition $(q,A,q')$.}}
\end{center}

Notice that no new states are added, and that all added transitions are
labelled with variables. Consider an example where the grammar has productions
$A~\to~a\mid BB$ and $B~\to~AB\mid b$. We apply the algorithm to the
automaton in the left half of figure~\ref{f:example}. The algorithm
will add transitions $(q_0,A,q_1)$, $(q_1,B,q_2)$, and $(q_2,A,q_1)$
because of the productions $A\to a$ and $B\to b$. Because of $B\to AB$,
we now get $(q_0,B,q_2)$ and $(q_2,B,q_2)$ which in turn lets us
apply the production $A\to BB$ to add $(q_0,A,q_2)$, $(q_1,A,q_2)$,
and $(q_2,A,q_2)$.
The right half of the figure depicts the resulting automaton.

\begin{figure}[t]
  \centering
  \begin{tikzpicture}[
      node distance=2.00cm, auto, thick, initial text={},
      scale=0.9, transform shape
    ]
   \node[state, initial] (q0)   {$q_0$};
    \node[state, right of=q0]   (q1)  {$q_1$};
    \node[state, accepting, right of=q1]   (q2)  {$q_2$};
    
    \path[->]
    (q0) edge node {$a$} (q1)
    (q1) edge[bend left=15] node {$b$} (q2)
    (q2) edge[bend left=15] node {$a$} (q1)
    ;

    \node[state, initial, right=1.2cm of q2] (r0)   {$q_0$};
    \node[state, right of=r0]   (r1)  {$q_1$};
    \node[state, accepting, right of=r1]   (r2)  {$q_2$};
    \node[below right =0.5cm and -0.5cm of q2]{{\normalsize $\begin{array}{l} A  \rightarrow a \mid  BB\\ B  \rightarrow AB \mid b \end{array}$}};
    
    \path[->]
    (r0) edge node {$a,A$} (r1)
    (r1) edge[bend left=10] node {$b, B,A$} (r2)
    (r2) edge[bend left=10] node {$a, A$} (r1)
    (r0) edge[bend left=60] node {$B,A$} (r2)
    (r2) edge[loop right] node {$B,A$} (r2)
    ;
\end{tikzpicture}
 \caption{Example automaton before (left) and after the algorithm (right).}
\label{f:example}
\end{figure}
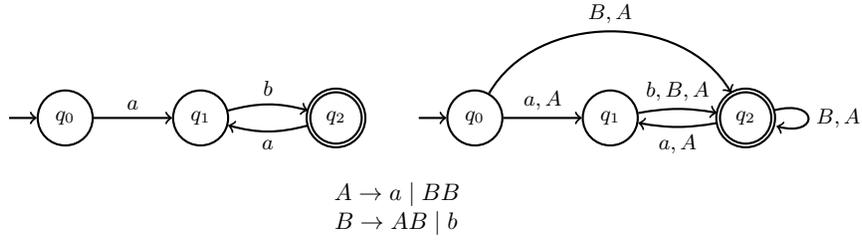

The correctness of the procedure was shown in~\cite{manyauthors}, but
for the sake of completeness we repeat the proof here.

\paragraph{Termination:}
The algorithm terminates because no new states are added and so the
number of possible transitions is finite.

\begin{lemma}
\label{l:pre1}
$L(A_\prestar)\subseteq\prestar(L(A))$
\end{lemma}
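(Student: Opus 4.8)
The plan is to prove a stronger statement about arbitrary paths in $A_\prestar$, from which the lemma drops out by instantiating the endpoints as $q_0$ and an accepting state. Concretely, I would establish the following \emph{Claim}: whenever $q\newarrow{w}q'$ holds in $A_\prestar$, there is a word $v$ with $q\newarrow{v}q'$ already in the original automaton $A$ and $w\Tostar v$. Granting the claim, any $w\in L(A_\prestar)$ admits an accepting path $q_0\newarrow{w}\qf$ with $\qf\in F$; the claim produces $v$ with $q_0\newarrow{v}\qf$ in $A$, so $v\in L(A)$, and $w\Tostar v$ then gives $w\in\prestar(L(A))$, as required.

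To prove the claim I would first equip each transition of $A_\prestar$ with a \emph{rank}: every transition already present in $A$ gets rank $0$, and the transition created by the $i$-th application of the saturation rule gets rank $i$. The decisive observation is that a transition $(r,A,r')$ of rank $i>0$ was added only because some production $A\to\beta\in P$ satisfied $r\newarrow{\beta}r'$ in the automaton as it stood just before step $i$; in particular that witnessing path uses only transitions of rank strictly smaller than $i$.

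The induction is then carried out on the finite multiset of ranks occurring along the path $q\newarrow{w}q'$, ordered by the (well-founded) multiset ordering. In the base case every transition has rank $0$, so the whole path already lies in $A$, and I take $v=w$ with $w\Tostar v$ by reflexivity. Otherwise I pick an occurrence of a transition $(r,A,r')$ of maximal rank $i$ and splice the witnessing path $r\newarrow{\beta}r'$ into its place. Writing $w=\alpha A\gamma$ according to where this transition sits, the spliced path reads $w'=\alpha\beta\gamma$, so that $w\To w'$ by the production $A\to\beta$; and since one rank-$i$ transition is removed while every spliced transition has rank $<i$, the new rank multiset is strictly smaller. The induction hypothesis yields $v$ with $q\newarrow{v}q'$ in $A$ and $w'\Tostar v$, and composing $w\To w'$ with $w'\Tostar v$ closes the step.

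I expect the one genuine obstacle to be choosing an induction measure that actually decreases under the splicing step: replacing a single high-rank transition can introduce many transitions of lower rank, so naive measures such as the number of non-original transitions or the sum of ranks may grow rather than shrink. The multiset ordering is exactly what makes the replacement monotone, and the supporting fact that each saturation step uses only strictly-earlier transitions (so the spliced subpath stays below the removed rank) is the crux that makes the argument terminate.
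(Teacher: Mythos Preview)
Your argument is correct, and the underlying insight---replace a saturation-introduced transition $(r,A,r')$ by the witnessing path $r\newarrow{\beta}r'$ and read off the grammar step $A\to\beta$---is exactly the one the paper uses. The organisation differs, though. The paper inducts on the stage~$i$ of the saturation (showing that every word accepted after $i$ additions lies in $\prestar(L(A))$) and, in the inductive step, replaces \emph{all} occurrences of the single rank-$i$ transition simultaneously. That one move drops every rank in the path below~$i$, so plain integer induction on~$i$ suffices and no multiset ordering is needed. Your version replaces one maximal-rank occurrence at a time and therefore genuinely requires the multiset measure you introduce; this is sound and yields the slightly stronger path-level claim, but it is heavier machinery than the problem demands. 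Either approach is fine; the paper's buys simplicity of the induction, yours buys a reusable statement about arbitrary $q\newarrow{w}q'$ paths.
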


\begin{proof}
We show that the lemma holds initially and is kept invariant by the
addition rule. Let $\Newarrow{}{i}$ denote the transition relation
of the automaton after the $i$-th addition. We proceed by induction on~$i$,
i.e.\ we show that the lemma holds after $i$ additions.

\paragraph{Basis.} $i=0$. Here, $\Newarrow{}{i}$ is the transition relation
of~$A$ and clearly $L(A)\subseteq\prestar(L(A))$.

\paragraph{Step.} $i>0$. Let $\alpha$ be a word accepted by the automaton
obtained after the $i$-th step, and let $q_0\Newarrow{\alpha}{i}\qf$ be the
path by which it is accepted. Either $q_0\Newarrow{\alpha}{i-1}\qf$ holds,
too, then by the induction hypothesis $\alpha\in\prestar(L(A))$. Otherwise,
let $(q,A,q')$ be the transition added in the $i$-th step. Then $\alpha$ can
be written as $\alpha_0A\alpha_1\ldots\alpha_n$ where $n\ge1$ is the number
of occurences of the new transition, and the
path has the form $q_0\Newarrow{\alpha_0}{i-1}q\Newarrow{A}{i}q'
\Newarrow{\alpha_1}{i-1}q\ldots q'\Newarrow{\alpha_n}{i-1}\qf$. The addition
rule dictates that there is some production $A\to\beta$ and that
$q\Newarrow{\beta}{i-1}q'$. Then $\alpha'=\alpha_0\beta\alpha_1\ldots\alpha_n$
is a direct successor of $\alpha$, and $q_0\Newarrow{w'}{i-1}\qf$.
By the induction hypothesis, $\alpha'$ is in $\prestar(L(A))$ and such
is $\alpha$.
\end{proof}

\begin{lemma}
\label{l:pre2}
$\prestar(L(A))\subseteq L(A_\prestar)$
\end{lemma}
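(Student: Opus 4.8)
The plan is to prove the reverse inclusion $\prestar(L(A))\subseteq L(A_\prestar)$ by showing that whenever $\alpha\Tostar\beta$ and $\beta\in L(A)$, the word $\alpha$ is accepted by the saturated automaton $A_\prestar$. The natural induction parameter is the length of the derivation $\alpha\Tostar\beta$, i.e.\ the number of production applications. First I would observe that the saturation rule is monotone: once the procedure terminates, the final transition relation $\to$ is closed under the rule, meaning that for every production $A\to\beta'$ in $P$ and every pair $q,q'$ with $q\newarrow{\beta'}q'$ in $A_\prestar$, the transition $(q,A,q')$ is already present in $A_\prestar$. This saturation-closure property is exactly what I will exploit in the inductive step, so I would state it explicitly at the outset.

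For the base case, suppose the derivation has length zero, so $\alpha=\beta\in L(A)$. Since $A_\prestar$ is obtained from $A$ only by \emph{adding} transitions, we have $L(A)\subseteq L(A_\prestar)$, and hence $\alpha\in L(A_\prestar)$. For the inductive step, consider a derivation $\alpha\Tostar\beta$ of length $n\ge1$ and split off its first step: $\alpha\To\alpha''\Tostar\beta$, where $\alpha''$ is derived from $\alpha$ by applying a single production. Write $\alpha=\eta_0 A\eta_1$ and $\alpha''=\eta_0\beta'\eta_1$, where $A\to\beta'\in P$ is the production used. The derivation $\alpha''\Tostar\beta$ has length $n-1$, so by the induction hypothesis $\alpha''\in L(A_\prestar)$; that is, there is an accepting path $q_0\newarrow{\eta_0}q\newarrow{\beta'}q'\newarrow{\eta_1}\qf$ in $A_\prestar$ for suitable states $q,q'$ and some $\qf\in F$.

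The key move is now to apply the saturation-closure property to the middle segment $q\newarrow{\beta'}q'$: since $A\to\beta'$ is a production and this path exists in the saturated automaton, the transition $(q,A,q')$ must belong to $A_\prestar$, so $q\newarrow{A}q'$ holds. Splicing this single transition in place of the $\beta'$-segment yields a path $q_0\newarrow{\eta_0}q\newarrow{A}q'\newarrow{\eta_1}\qf$, which reads the word $\eta_0 A\eta_1=\alpha$. Hence $\alpha\in L(A_\prestar)$, completing the induction.

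The main obstacle I anticipate is the careful handling of the decomposition of the accepting path for $\alpha''$ so that the factor reading $\beta'$ genuinely corresponds to a contiguous state pair $(q,q')$. Because $\beta'$ may be empty (if $\ee$-productions are allowed) or may overlap awkwardly with the surrounding factors $\eta_0$ and $\eta_1$, one must invoke the inductive definition of the transition relation $\to$ to guarantee the existence of intermediate states $q$ and $q'$ with $q_0\newarrow{\eta_0}q$, $q\newarrow{\beta'}q'$, and $q'\newarrow{\eta_1}\qf$; the concatenation clauses in the definition of $\to$ supply exactly this factorization, but the argument should make clear that the split can always be performed at the boundaries of the occurrence of $A$ that was rewritten. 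Once that factorization is established, the appeal to saturation closure is immediate, so the bookkeeping around the path decomposition is where the real care is needed.
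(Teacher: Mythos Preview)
Your proposal is correct and follows essentially the same route as the paper's proof: induction on the length of the derivation $\alpha\Tostar\beta$, splitting off the first rewriting step, applying the induction hypothesis to the successor, and then invoking the saturation rule to replace the $\beta'$-segment of the accepting path by a single $A$-transition. Your explicit isolation of the saturation-closure property and your remarks on factorizing the accepting path are more detailed than the paper's presentation, but the underlying argument is the same.
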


\begin{proof}
Let $\alpha$ be an element of $\prestar(L(A))$, and select $\beta\in L(A)$
such that $\alpha\Tostar\beta$. Let $i$ be the length of the derivation
$\alpha\Tostar\beta$. We proceed by induction on~$i$.

\paragraph{Basis.} $i=0$. Then $\alpha=\beta$, and $\alpha$ is accepted
by both $A$ and $A_\prestar$.

\paragraph{Step.} $i>0$. Then there exists $\gamma$ such that $\alpha\To\gamma$
and $\gamma\Tostar\beta$ in $i-1$ steps. Because of the former, there exist
$\alpha_1,\alpha_2,\alpha_3\in\Sigma^*$ and $A\in V$ such that
$\alpha=\alpha_1A\alpha_3$, \ $\gamma=\alpha_1\alpha_2\alpha_3$ and
$A\to\alpha_2\in P$. Because of the latter, by the induction hypothesis
$\gamma$ is accepted by $A_\prestar$, so there exists a path
$q_0\newarrow{\alpha_1}q\newarrow{\alpha_2}q'\newarrow{\alpha_3}\qf$ for
some $\qf\in F$ in $A_\prestar$. By the saturation rule, $(q,A,q')$ is
a transition of $A_\prestar$ and so $\alpha$ is accepted, too.
\end{proof}

\section{The Algorithm}
\label{s:impl}

We present an efficient implementation of the procedure from the previous
section. The new algorithm works on grammars which are in Chomsky normal
form extended with unit productions and $\e$-productions, i.e.\ we allow
productions of the form $A\to BC$, \ $A\to a$, \ $A\to B$, and $A\to\e$.
Notice that this is not a real
restriction since such a normal form can be obtained from an arbitrary
context-free grammar in linear time and with a linear growth in the size
of the productions. Also, assume that every element of
$\Sigma$ occurs in at least one production; otherwise, a missing element
can be removed in linear time. The algorithm is shown below.

\begin{figure}
\label{algorithm_prestar}

{\bf Input:} a context-free grammar $G=(V,T,P,S)$ in extended CNF;

\qquad\quad$\>$ an automaton $A = (Q,\Sigma,\delta,q_0,F)$\\
{\bf Output:} the automaton $A_\prestar$

\smallskip

{\algo
$\delta'\gets\emptyset;\ \eta\gets\delta;$		\label{l_deltaadd}
\forall{A\to a\in P, \ (q,a,q')\in\delta} \addto\eta{(q,A,q')}; %
							\label{l_singadd}
\forall{A\to \e\in P, \ q\in Q} \addto\eta{(q,A,q)};	\label{l_epsadd}
\while{\eta\ne\emptyset}
>remove $t=(q,B,q')$ from $\eta$;                 	\label{l_remove}
>\ifthen{t\notin\delta'}				\label{l_ifcheck}
>>\addto{\delta'}t;
>>\forall{A\to B\in P}					\label{l_forsing}
>>>\addto\eta{(q,A,q')};				\label{l_sing}
>>\forall{A\to BC\in P}					\label{l_for1}
>>>\forall{q''\ \hbox{s.t.}\ (q',C,q'')\in\delta'}	\label{l_qfor1}
>>>>\addto\eta{(q,A,q'')};				\label{l_dblfront}
>>\forall{A\to CB\in P}					\label{l_for2}
>>>\forall{q''\ \hbox{s.t.}\ (q'',C,q)\in\delta'}	\label{l_qfor2}
>>>>\addto\eta{(q'',A,q')};				\label{l_dblback}

[return] $(Q,\Sigma,\delta',q_0,F)$}
\end{figure}

Transitions which are known to belong to $A_\prestar$ are collected in
the sets $\delta'$ and~$\eta$. The latter contains transitions which still
need to be examined. No transition is examined more than once.

First, we observe that productions of the form $A\to a$ can be dealt with
once at the beginning since the algorithm will not add transitions labelled
by terminals. Similarly, the $\e$-productions are only regarded during
initialisation. It remains to deal with productions of the form
$A\to B$ and $A\to BC$
which the algorithm does in a fairly straightforward way.

We prove correctness of the new algorithm by showing that it is equivalent
to the algorithm from section~\ref{s:prestar}.

\paragraph{Termination:}
Initially, $\eta$ contains finitely many transitions. Every iteration of
the while-loop removes one of them. An iteration can add transitions only
if the check in line~\ref{l_ifcheck} is positive which is only finitely
often the case (since the size of $\delta'$ is bounded). Therefore $\eta$
will eventually become empty, and the algorithm terminates.

It remains to show that upon termination $\delta'$ is $\delta_\prestar$,
i.e.\ the smallest set containing~$\delta$ and closed under the saturation
rule. Since every element of $\eta$ eventually goes into $\delta'$, we examine
additions to $\eta$. These include the additions of $\delta$ in
line~\ref{l_deltaadd}, and it is easy to see that all other additions
fulfill the saturation rule. So $\delta'\subseteq\delta_\prestar$ holds.
For the other direction, $\delta'\supseteq\delta_\prestar$, we already
know that $\delta\subseteq\delta'$. Every other transition of $\delta_\prestar$
is due to the saturation rule. Assume that $A\to\beta\in P$ and
$q\snewarrow{\beta}q'$ in $A_\prestar$.
\begin{itemize}
\item if $\beta=\e$, then $q=q'$ and $(q,A,q')$ is added in line~\ref{l_epsadd}.
\item if $\beta=a$, then $(q,a,q')$ must be a transition of $\delta$
  (as already observed, the algorithm only adds transitions labelled with
  non-terminals). Then $(q,A,q')$ is added in line~\ref{l_singadd}.
\item if $\beta=B$, then $t_1=(q,B,q')$ must be a transition of $A_\prestar$.
  When $t_1$ is transferred from $\eta$ into $\delta'$, $(q,A,q')$ is added
  in line~\ref{l_sing}.
\item if $\beta=BC$, let $q''$ be a state such that $t_1=(q,B,q'')$ and
  $t_2=(q'',C,q')$ are transitions in $A_\prestar$. Depending on the order
  in which $t_1$ and $t_2$ occur in $\eta$, $(q,A,q')$ is added in
  line~\ref{l_dblfront} or~\ref{l_dblback}.
\end{itemize}

In~\cite{manyauthors} it was observed that a straightforward implementation
of Book and Otto's procedure may lead to an $O(p^2s^6)$ time algorithm if
$p$ is the number of productions and $s$ the number of states in~$A$.
In~\cite{ER97} an $O(ps^4)$ time algorithm was proposed,
and in~\cite{manyauthors}
the bounds were improved to $O(ps^3)$ time and space. The algorithm above
takes the same amount of time but only $O(ps^2)$ space. Moreover, we will
see later that its time complexity becomes $O(ps^2)$ for unambiguous grammars.

Imagine that $\delta'$ is implemented by a bit-array with one bit for each
element of $Q\times\Sigma\times Q$. Since $|\Sigma|=O(p)$ this takes
$O(ps^2)$ space. Membership test and addition can then be performed in
constant time. Let~$\eta$ be implemented as a stack such that finding an
element, addition and removal take constant time. Moreover, if we use a
second bit-array that keeps track of all additions to~$\eta$ ever,
we can prevent duplicate entries in~$\eta$ while still keeping the operations
constant in time. The space needed for these structures is $O(ps^2)$, too.

Line~\ref{l_deltaadd} takes at most $O(ps^2)$ time (the maximum size
of~$\delta$), the
same holds for line~\ref{l_singadd} (there are at most $s^2$ many transitions
labelled with~$a$). Line~\ref{l_epsadd} takes $O(ps)$ time. To find out the
complexity of the main loop, imagine that for every variable~$A$, there exist
three sets of productions, $A_{\it chain}$, $A_{\it front}$, and $A_{\it back}$.
A production $A\to B$ would be put into $B_{\it chain}$; a production
$A\to BC$ would occur in both $B_{\it front}$ and $C_{\it back}$. These sets
can be constructed prior to starting the algorithm proper in $O(p)$ time and
space. Then the loops in lines~\ref{l_forsing}, \ref{l_for1} and~\ref{l_for2}
need to traverse only the sets $B_{\it chain}$, $B_{\it front}$ and
$B_{\it back}$, i.e.\ those rules which are relevant for the selected
transition~$t$. Since
no element is added to $\delta'$ twice, line~\ref{l_sing} is executed
at most once for every rule $A\to B$ and states $q,q'$, i.e.\ $O(ps^2)$ times.
Similarly, lines~\ref{l_dblfront}
and~\ref{l_dblback} are both executed at most once for every combination
of productions $A\to BC$ and states $q,q',q''$, i.e.\ $O(ps^3)$ times.
Line~\ref{l_remove} is executed at most $O(ps^2)$ times.

From these observations it follows that the algorithm takes $O(ps^3)$ time
and $O(ps^2)$ space. As an aside, notice that in lines~\ref{l_qfor1}
and~\ref{l_qfor2} it is not necessary to iterate $q''$ over all elements
of~$Q$. Instead, one could maintain a set ${\it Back}(q',C)$ for every pair
$q'\in Q$ and $C\in V$ such that ${\it Back}(q',C)\ni q''$ exactly if
$(q',C,q'')\in\delta'$. Line~\ref{l_qfor1} then only needs to go through
${\it Back}(q',C)$. These sets can be updated in constant time whenever
a transition is added to~$\delta'$. Since the combined size of all ${\it Back}$
sets is no larger than $|\delta'|$, neither space nor time complexity are
affected. Similarly, one can maintain sets ${\it Front}(C,q)$ for
line~\ref{l_qfor2}.

\section{Applications}
\label{s:app}

In this section we show how several standard problems for context-free
grammars can be solved using the $\prestar$ algorithm. Let $G=(V,T,P,S)$
be a context-free grammar and let $p=|P|$.

\paragraph{Membership:}
Given a string $w\in T^*$ of length $n$, is $w\in L(G)$?
To solve the question check whether $S\Tostar w$ holds, i.e.\ if
$S\in\prestar(\{w\})$. An automaton
accepting $\{w\}$ has $n+1$ states. Hence, for a fixed grammar, the
algorithm takes $O(n^3)$ time and $O(n^2)$ space which is also
the complexity of the well-known CYK algorithm for the same problem.

However, there is more. Earley's algorithm~\cite{earley70} also takes cubic
time in general, but only quadratic time for unambiguous grammars.
Our algorithm has the same property. Assume that $G$ has no unreachable
variables. Then unambiguity implies for any pair of derivations
$A\to BC\Tostar u_1u_2$ and $A\to DE\Tostar u_3u_4$ with $u=u_1u_2=u_3u_4$
that $BC=DE$, $u_1=u_3$ and $u_2=u_4$ (otherwise there would be two
different parse trees for a word containing $u$). If $w$ consists of
terminals $w_1\ldots w_n$, an automaton accepting $\{w\}$ has transitions
of the form $(q_{i-1},w_i,q_i)$ for $1\le i\le n$. When computing
$\prestar(\{w\})$, a transition $(q_i,A,q_j)$ implies that
$A\Tostar w_{i+1}\ldots w_j$. Recall from the analysis of the algorithm
that the time complexity is dominated by the number of times
lines~\ref{l_dblfront} and~\ref{l_dblback} are executed, i.e.\ once
for every production $A\to BC$ and states $q,q',q''$ such that
$(q,B,q'')$ and $(q'',C,q')$ are transitions in the automaton.
In the unambiguous case we can deduce that once $q$ and $q'$ are known,
$q''$ is fixed. Hence, the lines are executed only $O(ps^2)$ times, and
the algorithm takes quadratic time in the number of states.

The $\prestar$ algorithm can be modified to solve the {\em parsing}
problem: Given a word $w\in L(G)$, produce a derivation $S\Tostar w$.
The modification consists of adding extra information to each transition
produced by the saturation rule. Every time a transition is added, annotate
it with the `reason' for its addition, i.e.\ with the production and the states
by which the saturation rule was fulfilled (notice that this information has
constant length for each transition, so the space complexity is not affected
by this change). After computing $\prestar(\{w\})$, take the transition
$(q_0,S,q_n)$ (if $n=|w|$) and repeatedly exploit the new information
until $w$ is derived.

\paragraph{Identifying useless variables:}
In order to rid a grammar of redundant symbols one computes the set of
useless variables. A variable $A$ is called useful if it occurs in the
parse tree of some word $w\in L(G)$, i.e.\ if there is a derivation
$S\Tostar w_1Aw_2\Tostar w$ for some $w_1,w_2\in T^*$. Otherwise $A$
is useless. Checking if a variable is useful amounts to two tests:
checking if $A$ is productive (whether there is some $w'\in T^*$ such that
$A\Tostar w'$) and checking if $A$ is reachable, i.e.\ there exist
$w_1,w_2\in T^*$ such that $S\Tostar w_1Aw_2$.
Productive variables can be identified by
computing $\prestar(T^*)$. An automaton accepting $T^*$ has one single
state, therefore this takes time $O(p)$. Notice that the test yields the set
of all productive variables. This time can also be achieved by a careful
implementation of the procedure given in~\cite{HU79}, Lemma 4.1.

To see whether a variable $A$ is reachable, we check if
$S\in\prestar(T^*AT^*)$. An automaton for this set has two states, so
this check takes linear time in the size of the grammar, too.
To find the set of all reachable variables, however, we need to repeat
the check for each variable, so the complexity would become $O(p\,|V|)$.
The procedure in~\cite{HU79} finds the set of all reachable variables
in $O(p)$, so the $\prestar$ algorithms performs worse in this case.

\paragraph{Emptiness:}
To check if the language generated by $G$ is empty, test if $S$ is unproductive,
i.e.\ $S\notin\prestar(T^*)$. We could also reduce emptiness to the
{\em containment} problem: Given a regular language~$L$, is $L(G)\subseteq L$?
Since $L(G)\subseteq L$ is equivalent to $L(G)\cap \bar L=\emptyset$ it is
sufficient to check if $S\notin\prestar(\bar L)$. If $\bar L$ can be
represented by an automaton with $s$~states, the procedure takes
$O(ps^3)$ time and $O(ps^2)$ space. For the emptiness problem we take
$L=\emptyset$; again, since $\bar L=T^*$ can be represented with one state,
the check takes $O(p)$ time and space.

\paragraph{Finiteness:}
To decide if the language generated by $G$ is finite, we first rid $G$ of all
useless symbols and recall Theorem~6.6 from~\cite{HU79}:
$L(G)$ is infinite exactly if there is a variable $A\in V$ and strings
$u,v\in T^*$ such that $A\Tostar uAv$ and $uv\ne\e$. Therefore, a solution
consists of checking whether
$A\in\prestar(T^+AT^*\mathrel\cup T^*AT^+)$ holds for some variable~$A$.
The whole procedure can be carried
out in quadratic time and linear space in the size of the grammar.
The algorithm given in~\cite{HU79} is quadratic in both time and space
(since it requires conversion to CNF) but could probably be converted
into a linear algorithm with a careful implementation.

\paragraph{Nullable variables:}
In~\cite{HU79} the problem of finding nullable variables is discussed
in the context of eliminating $\e$-productions. A variable~$A$ is called
nullable if $A\Tostar\e$ which is equivalent to $A\in\prestar(\{\e\})$.
This condition can be checked in $O(p)$ time using our algorithm.
In fact, computing $\prestar(\{\e\})$ yields the set of all nullable
variables. The procedure given in~\cite{HU79} can also be carried
out in linear time.

\section{Conclusions}

The material presented in this paper is mostly of pedagogical value.
The concept of computing $\prestar$ is relatively easy to understand
and provides a unified view of several standard problems. We think
that it would be suitable for undergraduate courses on formal
languages and automata theory where it can be used to replace the
various independent algorithms usually taught for these problems.

As a decision procedure, the algorithm is easily comprehensible;
its complexity analysis is more subtle but shows that the algorithm
is equally efficient in comparison to the standard solutions in many cases.
Of particular interest is the application to the membership problem
where the algorithm is more flexible than CYK.

Similar arguments were put forward in~\cite{manyauthors}. We have
built upon the work presented there and improved it by providing
an algorithm with better space complexity and by a more detailed
inspection of its applications. Moreover, we think that our algorithm
is easier to understand than the one in~\cite{manyauthors} which
contributes to its pedagogical merits.

\bibliographystyle{abbrv}
\bibliography{db}

\end{document}